\pdfoutput=1
\documentclass[a4paper,10pt,twocolumn]{article}

\usepackage[a4paper,textwidth=17.0cm,textheight=23.6cm,columnsep=0.7cm,top=2.4cm,hmargin=2.0cm]{geometry}
\usepackage{amsmath,amssymb,amsthm}
\usepackage{graphicx}
\usepackage{times}
\usepackage[hidelinks]{hyperref}

\newtheorem{proposition}{Proposition}
\newtheorem{remark}{Remark}

\makeatletter
\def\@maketitle{%
  \begin{center}
    {\Large\bfseries Persistence of Bounded Rationality in Financial Markets:\\
     An Evolutionary Game-Theoretic Model\par}
    \vskip 1.2em
    {\large Cheng ZHANG\par}
    \vskip 0.8em
    {\normalsize Best AI Co., Ltd., Tokyo, Japan\par}
    \vskip 1.2em
    \parbox{15cm}{\small\textbf{Abstract:} \@abstract}
    \vskip 1.6em
  \end{center}}
\long\def\abstracttext#1{\long\gdef\@abstract{#1}}
\makeatother

\abstracttext{%
Classical theory predicts that irrational traders lose money to rational traders and are eventually driven
out of the market. Yet noise trading, herding, and heuristic trading persist even in highly developed markets.
We model a financial market as an ecology of two trading-behavior types, relatively rational (R) and boundedly
rational (N), whose composition evolves under replicator dynamics according to relative performance. Each
type's fitness depends on the share of the other type and on frictions. When the interaction is complementary,
that is, rational traders profit from the mispricing created by boundedly rational traders while the latter
benefit from the liquidity and price correction supplied by the former, and when the difference in frictions is
moderate, a unique interior composition exists and is globally asymptotically stable: bounded rationality
persists endogenously. When the interaction is antagonistic, the same point becomes a tipping threshold between
two one-type regimes. Comparative statics show that lowering the costs of disciplined trading, as algorithmic and
AI-based trading does, shifts the composition toward rational trading continuously but does not eliminate bounded
rationality unless the cost advantage exceeds an explicit threshold.}

\begin{document}
\maketitle
\thispagestyle{empty}

\section{Introduction}

The assumption of rational agents has long been the foundation of economic theory and financial models.
Market participants are assumed to form unbiased expectations, process information efficiently, and
maximize expected utility. Excessive price fluctuations, momentum and reversal, bubbles and crashes, and
persistent mispricing, however, suggest that many participants systematically deviate from this benchmark.

The classical defense of rationality is a selection argument: traders who act irrationally lose money to
rational traders and are eventually driven out of the market~\cite{friedman1953}. If this argument were
complete, bounded rationality would be a transient phenomenon. In reality, it is a persistent feature of
markets, including markets dominated by professional and algorithmic traders.

This paper examines the selection argument itself. We view a financial market as an aggregate of
heterogeneous trading-behavior types whose shares change according to relative performance, and ask under
what conditions selection eliminates bounded rationality and under what conditions it preserves it. In this
perspective, rationality is not an axiom but an outcome of selection, adaptation, and survival.

The contributions are as follows. (i) We formulate a two-type model in which relatively rational and
boundedly rational behaviors interact through cross-type effects and frictions, and evolve under replicator
dynamics. (ii) We prove that, when the interaction is complementary and the friction difference is moderate,
the market converges to a unique mixed composition from any interior initial state, so bounded rationality
persists endogenously. (iii) We show that when the interaction is antagonistic, the same point becomes a
tipping threshold and the market is bistable. (iv) We derive comparative statics that clarify how cost
reductions brought by algorithmic and AI-based trading shift the long-run composition.

\section{Related Work}

Bounded rationality goes back to Simon~\cite{simon1955}. Friedman's selection
argument~\cite{friedman1953} underpins the efficient market hypothesis~\cite{fama1970}. De Long
et al.~\cite{delong1990} showed that noise traders can generate price distortions that rational arbitrage
does not eliminate, and Shleifer and Vishny~\cite{shleifer1997} emphasized the limits of arbitrage. Kogan
et al.~\cite{kogan2006} and Blume and Easley~\cite{blume2006} analyzed when market selection does and does
not favor accurate beliefs.

Replicator dynamics were introduced by Taylor and Jonker~\cite{taylor1978}; their stability properties are
treated in~\cite{weibull1995,hofbauer1998,sandholm2010}. In finance, Kirman~\cite{kirman1993},
Lux~\cite{lux1995}, and Brock and Hommes~\cite{brock1998} showed how heterogeneous strategies and adaptive
switching generate rich market dynamics; Hommes~\cite{hommes2006} surveys this literature. Farmer~\cite{farmer2002}
described markets as an ecology of strategies, and Lo's adaptive markets hypothesis~\cite{lo2004} reconciles
efficiency with behavioral deviations through competition and natural selection.

This paper shares the ecological view of Farmer and Lo, but formulates it as an explicit replicator model in
which the sign of the cross-type interaction and the difference in frictions determine whether bounded
rationality persists.

\section{Model}

\subsection{Trader types}
Consider a market with many traders and two representative behavior types: (i) R-type (relatively
rational): traders with higher information-processing ability who use disciplined decision rules; and
(ii) N-type (boundedly rational): traders who rely on emotions, heuristics, or trend-following and
systematically deviate from rational benchmarks. Let $x\in[0,1]$ denote the share of R-type traders and
$1-x$ the share of N-type traders.

\subsection{Fitness with cross-type interaction}
We assume each type's fitness depends on the share of the \emph{other} type:
\begin{align}
\pi_R(x) &= a + b(1-x) - \kappa_R, \label{eq:piR}\\
\pi_N(x) &= c + d x - \kappa_N, \label{eq:piN}
\end{align}
where $a$ and $c$ are baseline fitness terms, $b$ and $d$ measure the cross-type interaction, and
$\kappa_R,\kappa_N$ capture frictions such as transaction costs, execution delays, regulatory constraints,
and technological or cognitive costs.

The signs of $b$ and $d$ determine the nature of the coupling. In the baseline case $b,d>0$, the interaction
is \emph{complementary}. The term $b(1-x)$ represents the profits R-type traders earn by trading against the
mispricing created by N-type traders, in the spirit of De Long et al.~\cite{delong1990}: the more noise, the
more opportunities for disciplined strategies. The term $dx$ represents the benefit N-type traders receive
from R-type traders, who supply liquidity and pull prices toward fundamentals, reducing the losses and
execution costs of heuristic trading. We also consider the \emph{antagonistic} case $b+d<0$, in which each
type mainly degrades the environment of the other.

\subsection{Replicator dynamics}
Let $n_R(t)$ and $n_N(t)$ denote the population masses (or capital) of the two behaviors and assume
proportional growth driven by fitness, $\dot n_R=n_R\pi_R$ and $\dot n_N=n_N\pi_N$. Differentiating
$x=n_R/(n_R+n_N)$ gives the replicator dynamics~\cite{taylor1978,weibull1995}
\begin{equation}
\dot{x}=x\bigl(\pi_R(x)-\bar{\pi}(x)\bigr)=x(1-x)\bigl[\pi_R(x)-\pi_N(x)\bigr],
\label{eq:rep}
\end{equation}
where $\bar\pi=x\pi_R+(1-x)\pi_N$. The dynamics do not assume that each trader optimizes; they describe
a selection process in which behaviors with relatively high performance expand through imitation and capital
flows. This differs from reinforcement learning, which describes the policy updates of a single agent.

\section{Analysis}
\label{sec:analysis}

Define the fitness difference $\Delta(x):=\pi_R(x)-\pi_N(x)$. From \eqref{eq:piR}--\eqref{eq:piN},
\begin{equation}
\Delta(x)=\theta-(b+d)x,\qquad \theta:=(a+b-c)-(\kappa_R-\kappa_N),
\label{eq:delta}
\end{equation}
where $\theta=\Delta(0)$ is the advantage of the R-type when it is rare. The boundary points $x=0,1$ are
always fixed points of \eqref{eq:rep}. Any interior fixed point must satisfy $\Delta(x)=0$, and since
$\Delta$ is affine, the only candidate is
\begin{equation}
x^*=\frac{\theta}{b+d}=\frac{(a+b-c)-(\kappa_R-\kappa_N)}{b+d}.
\label{eq:xstar}
\end{equation}

\subsection{Complementary interaction: stable coexistence}

\begin{proposition}[Existence and global stability]\label{prop:1}
Suppose $b+d>0$. (i) If
\begin{equation}
0<\theta<b+d,
\label{eq:range}
\end{equation}
then $x^*\in(0,1)$ is the unique interior fixed point, and every trajectory of \eqref{eq:rep} with
$x(0)\in(0,1)$ converges to $x^*$. (ii) If $\theta\le0$, then $x(t)\to0$; if $\theta\ge b+d$, then
$x(t)\to1$.
\end{proposition}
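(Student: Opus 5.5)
The plan is a one-dimensional phase-line argument. By \eqref{eq:delta} the vector field in \eqref{eq:rep} factors as
\begin{equation*}
\dot x = F(x):=x(1-x)\bigl[\theta-(b+d)x\bigr]=(b+d)\,x(1-x)\,(x^*-x),
\end{equation*}
where the second form uses $b+d>0$ and \eqref{eq:xstar}. Since $F$ is a polynomial, solutions exist and are unique. The points $0$ and $1$ are equilibria, so by uniqueness any trajectory with $x(0)\in(0,1)$ stays in $(0,1)$ for all $t\ge0$. On $(0,1)$ the factor $x(1-x)$ is strictly positive, so the sign of $\dot x$ equals the sign of $x^*-x$. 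All claims then follow from monotonicity of scalar autonomous flows.

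For (i), the hypothesis \eqref{eq:range} divided by $b+d>0$ gives $x^*\in(0,1)$. It is the unique interior zero of $F$, because $\Delta$ is affine with nonzero slope $-(b+d)$. Take $x(0)\in(0,x^*)$. Then $x^*$ is also an equilibrium, so by uniqueness the trajectory cannot reach it and stays in $(0,x^*)$, where $\dot x>0$. Hence $x(t)$ is increasing and bounded, and has a limit $\ell\in(x(0),x^*]$.

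The step requiring care is identifying $\ell$. I would use the standard fact that the limit of a bounded monotone solution of an autonomous scalar ODE is an equilibrium. Concretely, if $F(\ell)\neq0$, continuity gives $\dot x\ge |F(\ell)|/2$ for all large $t$, which contradicts boundedness. So $F(\ell)=0$, and since $\ell>x(0)>0$, this forces $\ell=x^*$. The case $x(0)\in(x^*,1)$ is symmetric: $x(t)$ decreases and converges to an equilibrium in $[x^*,x(0))$, which must be $x^*$.

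For (ii), let $\theta\le0$. Then $x^*\le0$, so $x^*-x<0$ and hence $\dot x<0$ on $(0,1)$. The solution decreases to an equilibrium in $[0,x(0))$, and the only candidate is $0$.

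Now let $\theta\ge b+d$. Then $x^*\ge1$, so $\dot x>0$ on $(0,1)$, and the same argument gives $x(t)\to1$.

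In the boundary cases $\theta=0$ or $\theta=b+d$, $x^*$ coincides with an endpoint. The sign of $\dot x$ on the open interval is still strict, so the argument is unchanged. This is the only place where the non-strict inequalities need checking.
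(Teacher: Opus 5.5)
Your proposal is correct and follows essentially the same phase-line argument as the paper: write $\Delta(x)=(b+d)(x^*-x)$, read off the sign of $\dot x$ on $(0,1)$, and use that bounded monotone trajectories converge to an equilibrium, which can only be $x^*$ in (i) and $0$ or $1$ in (ii). You also make explicit two steps the paper leaves implicit: the invariance of $(0,1)$, $(0,x^*)$ and $(x^*,1)$ by uniqueness, and the argument that a monotone limit must be an equilibrium; the paper adds a linearization check of local stability that the statement does not require.
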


\begin{proof}
(i) Under \eqref{eq:range}, $x^*\in(0,1)$ and $\Delta(x)=(b+d)(x^*-x)$, so for $x\in(0,1)$ the sign of
$\dot x$ equals the sign of $x^*-x$. A trajectory starting in $(0,x^*)$ increases and one starting in
$(x^*,1)$ decreases; being monotone and bounded, each converges to a fixed point, which can only be $x^*$.
Local asymptotic stability also follows from
\begin{equation}
\left.\frac{d\dot x}{dx}\right|_{x^*}=x^*(1-x^*)\Delta'(x^*)=-(b+d)x^*(1-x^*)<0.
\label{eq:lin}
\end{equation}
(ii) If $\theta\le0$, then $\Delta(x)<0$ on $(0,1)$ and $x$ decreases to $0$. If $\theta\ge b+d$, then
$\Delta(x)\ge(b+d)(1-x)>0$ on $(0,1)$ and $x$ increases to $1$.
\end{proof}

Condition \eqref{eq:range} has a direct reading. The left inequality, $\kappa_R-\kappa_N<a+b-c$, says that
the cost disadvantage of disciplined trading must not exceed the advantage it enjoys when it is rare and noise
is abundant. The right inequality, $\kappa_N-\kappa_R<c+d-a$, says symmetrically that the cost advantage of
disciplined trading must not exceed the advantage heuristic trading enjoys when it is rare. Frictions enter
only through the difference $\kappa_R-\kappa_N$: a friction that burdens both types equally does not affect
the long-run composition.

The economic intuition is negative frequency dependence. When N-type behavior is widespread, mispricing is
abundant and disciplined strategies are highly profitable, so the R-type expands. As it grows, arbitrage
opportunities are competed away, while the liquidity it supplies makes heuristic trading less costly, so the
advantage shifts back toward the N-type. The market therefore settles at an intermediate composition, and
bounded rationality persists as a stable component of the market ecology rather than being eliminated as in
Friedman's argument.

\subsection{Antagonistic interaction: tipping threshold}

\begin{remark}[Bistability]\label{rem:1}
Suppose $b+d<0$, i.e., abundant noise trading deteriorates price signals for disciplined rules, and abundant
disciplined trading tightens arbitrage conditions against heuristic trading. If $b+d<\theta<0$, then
$x^*\in(0,1)$ still exists, but \eqref{eq:lin} is positive, so $x^*$ is unstable. Trajectories with
$x(0)>x^*$ converge to $1$ and those with $x(0)<x^*$ converge to $0$.
\end{remark}

In this case $x^*$ is not a long-run state but a tipping threshold separating two one-type regimes. Small
parameter changes that move $x^*$ past the current composition can trigger an abrupt, self-reinforcing shift
of the whole market. The sign of the cross-type coupling, rather than the mere existence of an interior fixed
point, determines whether bounded rationality persists.

\subsection{Comparative statics}

\begin{proposition}[Comparative statics]\label{prop:2}
Suppose $b+d>0$ and \eqref{eq:range} holds. Then
\begin{align}
\frac{\partial x^*}{\partial \kappa_R}&=-\frac{1}{b+d}, &
\frac{\partial x^*}{\partial \kappa_N}&=\frac{1}{b+d}, \nonumber\\
\frac{\partial x^*}{\partial b}&=\frac{1-x^*}{b+d}, &
\frac{\partial x^*}{\partial d}&=-\frac{x^*}{b+d}.
\label{eq:cs}
\end{align}
In particular, a reduction of the friction $\kappa_R$ raises the long-run R-type share, but the N-type is
eliminated only if the cost advantage of the R-type reaches $\kappa_N-\kappa_R\ge c+d-a$.
\end{proposition}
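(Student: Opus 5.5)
The plan is to differentiate the closed form \eqref{eq:xstar}, $x^*=\theta/(b+d)$ with $\theta=(a+b-c)-(\kappa_R-\kappa_N)$. Under the hypotheses $b+d>0$ and \eqref{eq:range}, $x^*$ is a smooth (rational) function of the parameters on an open set, and by Proposition~\ref{prop:1}(i) it is the unique interior fixed point and globally attracting. Its derivatives therefore describe how the long-run composition responds to the parameters, and no implicit-function argument is needed.

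First I treat the frictions. They enter only through $\theta$, with $\partial\theta/\partial\kappa_R=-1$ and $\partial\theta/\partial\kappa_N=1$, while the denominator does not depend on them. This gives $\partial x^*/\partial\kappa_R=-1/(b+d)$ and $\partial x^*/\partial\kappa_N=1/(b+d)$.

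Next come the interaction parameters, which appear in both numerator and denominator. For $b$, note $\partial\theta/\partial b=1$, so the quotient rule gives
\begin{equation*}
\frac{\partial x^*}{\partial b}=\frac{(b+d)-\theta}{(b+d)^2}=\frac{1-x^*}{b+d}.
\end{equation*}
For $d$, $\theta$ does not depend on $d$, so
\begin{equation*}
\frac{\partial x^*}{\partial d}=-\frac{\theta}{(b+d)^2}=-\frac{x^*}{b+d}.
\end{equation*}
The only point needing care is to rewrite each result using $x^*=\theta/(b+d)$.

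For the final claim I argue as follows.
\begin{itemize}
\item Since $\partial x^*/\partial\kappa_R<0$, lowering $\kappa_R$ strictly raises $x^*$, and it does so continuously and linearly.
\item $x^*<1$ is equivalent to $\theta<b+d$, which unpacks to $\kappa_N-\kappa_R<c+d-a$. So while this strict inequality holds, Proposition~\ref{prop:1}(i) gives convergence to an interior $x^*$, and the N-type survives.
\item If instead $\kappa_N-\kappa_R\ge c+d-a$, then $\theta\ge b+d$, and Proposition~\ref{prop:1}(ii) gives $x(t)\to1$, so the N-type is eliminated.
\end{itemize}
This establishes the ``only if'' direction, and in fact an equivalence. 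Nothing here is a real obstacle. The one subtle point is that the derivative formulas are stated under \eqref{eq:range}, while elimination occurs exactly at the boundary $\theta=b+d$. The final claim must therefore be obtained from Proposition~\ref{prop:1}(ii), not from the derivatives.
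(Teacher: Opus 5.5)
Your proposal is correct and follows the paper's proof essentially step for step. You differentiate the closed form \eqref{eq:xstar}, and the paper likewise gets $[(b+d)-\theta]/(b+d)^2$ for $b$. You then obtain the elimination claim from the right inequality of \eqref{eq:range} together with Proposition~\ref{prop:1}(ii).

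Your extra converse, that the N-type survives while $\kappa_N-\kappa_R<c+d-a$, is also sound. Lowering $\kappa_R$ from a point satisfying \eqref{eq:range} only increases $\theta$, so $\theta>0$ persists and Proposition~\ref{prop:1}(i) applies. Even without that, $\theta\le 0$ would give $x\to 0$ by Proposition~\ref{prop:1}(ii), so the N-type would still not be eliminated.
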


\begin{proof}
The derivatives follow from \eqref{eq:xstar}; for $b$, numerator and denominator both increase by one unit,
giving $[(b+d)-\theta]/(b+d)^2$. The last statement is the right inequality of \eqref{eq:range} together with
Proposition~\ref{prop:1}(ii).
\end{proof}

\section{Numerical Illustration}

Fig.~\ref{fig:two} illustrates the results with $a=c=1$. In the complementary case
(Fig.~\ref{fig:two}(a); $b=1$, $d=0.6$, $\kappa_R=0.4$, $\kappa_N=0$), $\theta=0.6$ and $x^*=0.375$; all
trajectories converge to $x^*$ regardless of the initial composition. In the antagonistic case
(Fig.~\ref{fig:two}(b); $b=-1$, $d=-0.6$, $\kappa_R=0$, $\kappa_N=0.4$), $\theta=-0.6$ and the same value
$x^*=0.375$ separates the basins of the two one-type regimes. Fig.~\ref{fig:two}(c) shows the long-run
composition in the complementary case as a function of the friction difference: it decreases linearly with
slope $-1/(b+d)$, and the market becomes one-type only outside the interval $-0.6<\kappa_R-\kappa_N<1$
given by \eqref{eq:range}.

\begin{figure*}[t]
\centering
\includegraphics[width=0.95\textwidth]{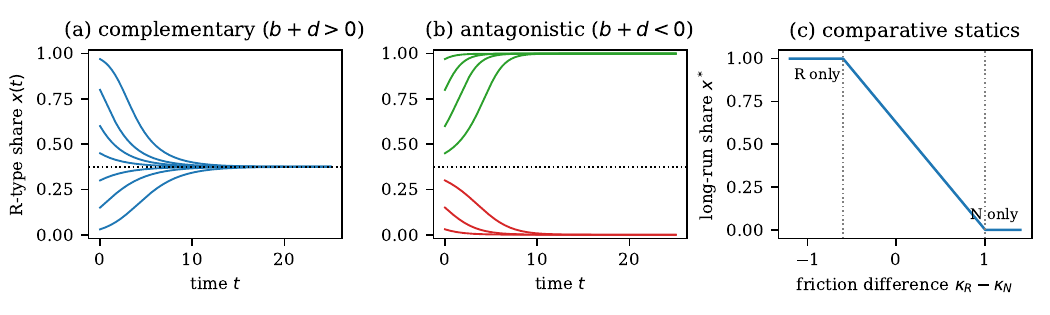}
\caption{(a) Complementary interaction: all trajectories converge to the mixed composition $x^*=0.375$
(dotted). (b) Antagonistic interaction: $x^*$ is a tipping threshold; trajectories starting above it converge
to the R-only market and those below to the N-only market. (c) Long-run R-type share in the complementary case
as a function of the friction difference $\kappa_R-\kappa_N$.}
\label{fig:two}
\end{figure*}

\section{Implications for Algorithmic and AI Trading}

Algorithmic and AI-based trading lower information-processing, monitoring, and execution costs. In the
model, this corresponds mainly to a fall in $\kappa_R$, the friction faced by disciplined strategies.
Proposition~\ref{prop:2} shows that such technological progress shifts the mixed ecology toward the R-type
continuously, but does not remove bounded rationality as long as the benefit $d$ that heuristic traders derive
from disciplined traders remains large enough. Technological progress should therefore not be read as a
monotonic movement toward universal rationality.

A more drastic effect arises if technology changes the \emph{sign} of the coupling. When many traders use
similar data and similar models, responses become faster and more synchronized, and crowding of the kind seen
in quantitative strategies in August 2007~\cite{khandani2011} may turn the interaction from complementary to
antagonistic. By Remark~\ref{rem:1}, the stable mixed ecology is then replaced by a bistable market whose
long-run composition depends on history, and small parameter changes can trigger abrupt regime shifts.

In this paper, technology enters only through the parameters of the two-type model. A natural next step is to
treat AI traders as a third behavior type that competes with human traders under the same selection mechanism,
which we leave to future work.

\section{Conclusion}

We proposed an evolutionary game-theoretic model in which bounded rationality is treated not as an anomaly
but as an endogenous component of market ecology. Under replicator dynamics, boundedly rational behavior
coexists with disciplined strategies in the long run when the cross-type interaction is complementary and the
friction difference is moderate (Proposition~\ref{prop:1}); under antagonistic interaction, the market is
bistable (Remark~\ref{rem:1}). Cost reductions brought by algorithmic and AI trading shift the composition
continuously but eliminate bounded rationality only beyond an explicit threshold (Proposition~\ref{prop:2}).

Future work includes introducing AI traders as an explicit behavior type, adding a price-formation mechanism
that links composition to returns and volatility, and confronting the model with data on the composition of
trading by investor type.

\end{document}